\theoremstyle{plain}
\newtheorem{theorem}{Theorem}[section]
\theoremstyle{plain}
\newtheorem{lemma}[theorem]{Lemma}
\theoremstyle{definition}
\newtheorem{definition}{Definition}[section]
\newtheorem{example}{Example}[section]
\numberwithin{equation}{section}
\title[Risk-based optimization of an insurer]
 {Risk-based optimal portfolio of an insurer with regime switching and noisy memory }
\date{\today}
\begin{document}

\author{Rodwell Kufakunesu}

\address{Department of Mathematics and Applied Mathematics, University of
Pretoria, 0002, South Africa}
\email{rodwell.kufakunesu@up.ac.za}

\author{Calisto Guambe}
\address{Department of Mathematics and Applied Mathematics, University of
Pretoria, 0002, South Africa}
\address{Department of Mathematics and Informatics, Eduardo Mondlane University,
257, Mozambique}
\email{calistoguambe@yahoo.com.br}

\author{Lesedi Mabitsela}
\address{Department of Mathematics and Applied Mathematics, University of
Pretoria, 0002, South Africa}
\email{rodwell.kufakunesu@up.ac.za}

\keywords{Optimal investment, Jump-diffusion, Regime-Switching, Noisy memory,
BSDE, convex risk measures}

\begin{abstract}
In this paper, we consider a risk-based optimal investment problem of an insurer
in a regime-switching jump-diffusion model with noisy memory. Using the model
uncertainty modeling, we formulate the investment problem as a zero-sum,
stochastic differential delay game between the insurer and the market, with a
convex risk measure of the terminal surplus and the Brownian delay surplus over
a period $[T-\varrho,T]$. Then, by the BSDE approach, the game problem is
solved. Finally, we derive analytical solutions of the game problem, for a
particular case of a quadratic penalty function and a numerical example is considered.
\end{abstract}

\maketitle
\section{Introduction}

 Stochastic delay equations are equations whose coefficients depend also on past
history of the solution. They appear naturally in economics, life science,
finance, engineering, biology, etc. In Mathematics of Finance,
the basic assumption of the evolution price processes is that they are Markovian.
In reality, these processes possess some memory which cannot be neglected.
Stochastic delay control problems have received much interest in recent times and
these are solved by different methods. For instance, when the state process depends on the discrete and average delay, Elsanoni {\it
et. al.} \cite{elsanoni} studied an optimal harvesting problem using the dynamic
programming approach. On the other hand, a maximum principle approach was used
to solve optimal stochastic control systems with delay. See e.g., Oksendal and
Sulem \cite{oksendal2001}, Pamen \cite{pamen2015}. When the problem allows a noisy
memory, i.e., a delay modeled by a Brownian motion, Dahl {\it et. al}
\cite{Dahl} proposed a maximum principle approach with Malliavin derivatives to
solve their problem. For detailed information on the theory of stochastic delay
differential equations (SDDE) and their applications to stochastic control
problems, see, e.g., B\~anos {\it et. al.} \cite{banos}, Kuang \cite{kuang},
Mohammed \cite{mohammed} and references therein.

In this paper, we consider an insurer's risk-based optimal investment problem
with noisy memory. The financial market model set-up is composed by one risk-free asset and
one risky asset described by a hidden Markov regime-switching jump diffusion
process. The jump-diffusion models represent a valuable extension of the diffusion
models for modeling the asset prices \cite{oksendal2001}. They capture some sudden changes in the
market such as the existence of high-frequency data, volatility clusters and
regime switching. It is important to note that in the Markov regime-switching
diffusion models, we can have random coefficients possibly with jumps, even if
the return process is a diffusion one. In this paper, we consider a jump diffusion model,
which incorporates jumps in the asset price as well as in the model
coefficients, i.e., a Markov regime-switching jump-diffusion model. Furthermore, we consider
the Markov chain to represent different modes of the economic environment
such as, political situations, natural catastrophes, etc. Such kind of models
have been considered for option pricing of the contingent claim, see for
example, Elliott {\it et. al} \cite{elliott2007},  Siu \cite{siu2014} and
references therein. For stochastic optimal control problems, we mention the
works by B\"auerle and Rieder \cite{rieder}, Meng and Siu \cite{Meng}. In these works a
portfolio asset allocation and a risk-based asset allocation of a
Markov-modulated jump process model has been considered and solved via the
dynamic programming approach. We also mention a recent work by Pamen and Momeya
\cite{pamen2017}, where a maximum principle approach has been applied to an
optimization problem described by a Markov-modulated regime switching jump-diffusion model.

In this paper, we assume that the company receives premiums at the constant rate
and pays the aggregate claims modeled by a hidden Markov-modulated pure jump process.
We assume the existence of capital inflow or outflow from the insurer's current
wealth, where the amount of the capital is proportional to the
past performance of the insurer's wealth. Then, the surplus process is governed
by a stochastic delay differential equation with the delay, which may be random. Therefore we find it reasonable to consider  also a delay modeled by Brownian motion. In literature,
a mean-variance problem of an insurer was considered, but the wealth process is
given by a diffusion model with distributed delay, solved via the maximum principle
approach (Shen and Zeng \cite{Shen}). Chunxiang and Li \cite{chunxing2015} extended this mean-variance problem of an insurer to the Heston stochastic volatility case and solved using dynamic programming approach. For thorough discussion on different types of delay, we refer to Ba\~nos {\it et. al.} \cite{banos}, Section 2.2.

We adopt a convex risk measure first introduced by Frittelli and Gianin
\cite{frittelli} and F\"ollmer and Schied \cite{follmer}. This generalizes the
concept of coherent risk measure first introduced by Artzner {\it et. al.}
\cite{artzner}, since it includes the nonlinear dependence of the risk of the
portfolio due to the liquidity risks. Moreover, it relaxes a sub-additive and
positive homogeneous properties of the coherent risk measures and substitute
these by a convex property.

When the risky share price is described by a diffusion process and without
delay, such kind of risk-based optimization problems of an insurer have been
widely studied and reported in literature, see
e.g., Elliott and Siu \cite{Elliott2011,elliott}, Siu \cite{siu2012,siu,
siu2014}, Peng and Hu \cite{peng}. For a jump-diffusion case, we refer to
Mataramvura and \O ksendal \cite{Mataramvura}.

To solve our optimization problem, we first transform the unobservable Markov
regime-switching problem into one with complete observation by using the so-called filtering theory, where the optimal Markov chain is also derived. For
interested readers, we refer to Elliott {\it et. al.} \cite{elliott2008},
Elliott and Siu \cite{elliott}, Cohen and Elliott \cite{cohen} and Kallianpur
\cite{kallianpur}. Then we formulate a convex risk measure described by a
terminal surplus process as well as the dynamics of the noisy memory surplus
over a period $[T-\varrho,T]$ of the insurer to measure the risks. The main
objective of the insurer is to select the optimal investment strategy so as to
minimize the risk. This is a two-player zero-sum stochastic delayed differential
game problem. Using delayed backward stochastic differential equations (BSDE) with a
jump approach, we solve this game problem by an application of a comparison
principle for BSDE with jumps. Our modeling framework follows that in Elliott
and Siu \cite{Elliott2011}.

The rest of the paper is organized as follows: In Section 2, we introduce the
dynamic of state process described by SDDE in the Hidden Markov regime switching
jump-diffusion market. In Section 3, we use the filtering theory to turn the
model into one with complete observation. We also derive the optimal Markov
chain. Section 4, is devoted to the formulation of our risk-base optimization
problem as a zero-sum stochastic delayed differential game problem, which is
then solved in Section 5. Finally, in Section 6, we derive the explicit
solutions for a particular case of a quadratic penalty function and we give an example to show how one can apply these results in a concrete situation.

\section{Model formulation}
Suppose we have an insurer investing in a finite investment period $T<\infty$.
Consider a complete filtered probability space
($\Omega,\mathcal{F},\{\mathcal{F}_t\}_{0\leq t\leq T},\mathbb{P}$), where
$\{\mathcal{F}_t\}_{t\in[0,T]}$ is a filtration satisfying the usual conditions
(Protter \cite{Protter}).  Let $\Lambda(t)$ be a continuous time finite state hidden
Markov chain defined on $(\Omega, \mathcal{F}, \mathbb{P})$, with a finite
state space $\mathcal{S}=\{e_1,e_2,\ldots,e_D\}\subset\mathbb{R}_D$,
$e_j=(0,\ldots,1,0,\ldots,0)\in\mathbb{R}^D$, where $D\in\mathbb{N}$ is the
number of states of the chain, and the $j$th component of $e_n$ is the Kronecker
delta $\delta_{nj}$, for each $n,j=1,2,\ldots,D$.  $\Lambda(t)$ describes the
evolution of the unobserved state of the model parameters in the financial
market over time, i.e., a process which collects factors that are relevant for
the model, such as, political situations, laws or natural catastrophes (see,
e.g. Bauerle and Rieder \cite{rieder}, Elliott and Siu \cite{elliott}). The main
property of the Markov chain $\Lambda$ with the canonical state space $\mathcal{S}$ is that, any nonlinear function of
$\Lambda$, is linear in $\Lambda$, i.e.,
$\varphi(\Lambda)=\langle\varphi,\Lambda\rangle$, where
$\langle\cdot,\cdot\rangle$ denotes the inner product in $\mathbb{R}^D$. For
detailed information, see, for instance, Elliott {\it et. al.}
\cite{elliott2008}.

To describe the probability law of the chain $\Lambda$, we define a family of
intensity matrix $A(t):=\{a_{ji}(t);\,\,t\in[0,T]\}$, where $a_{ji}(t)$ is the
instantaneous transition intensity of the chain $\Lambda$ from state $e_i$ to
state $e_j$ at time $t\in[0,T]$. Then it was proved in Elliott {\it et. al.}
\cite{elliott2008}, that $\Lambda$ admits the following semi-martingale dynamics:

$$
\Lambda(t)=\Lambda(0)+\int_0^tA(s)\Lambda(s)ds+\Phi(t)\,,
$$
where $\Phi$ is an $\mathbb{R}^D$-valued martingale with respect to the natural filtration generated by $\Lambda$.

To describe the dynamics of the financial market, we consider a Brownian motion
$W(t)$ and a compensated Markov regime-switching Poisson random measure
$\tilde{N}_{\Lambda}(dt,dz):=N(dt,dz)-\nu_{\Lambda}(dz)dt$, with the dual
predictable projection $\nu_\Lambda$ defined by
$$
\nu_{\Lambda}(dt,dz)=\sum_{j=1}^D\langle\Lambda(t-),
e_j\rangle\varepsilon_j(t)\nu_j(dz)dt\,,
$$
where $\nu_j$ is the conditional Levy measure of the random jump size and
$\varepsilon_j$ is the intensity rate when the Markov chain $\Lambda$ is in
state $e_j\,$. We suppose that the processes $W$ and $N$ are independent.

We consider a financial market consisting of one risk-free asset $(B(t))_{0\leq
t\leq T}$ and one risky  asset $(S(t))_{0\leq t\leq T}$. Their respective prices
are given by the following regime-switching stochastic differential equations
(SDE):

\begin{eqnarray}\nonumber
  dB(t) &=& r(t)B(t)dt\,, \ \ B(0)=1\,, \label{risk-free} \\ \nonumber
  dS(t) &=&
S(t)\Bigl[\alpha^{\Lambda}(t)dt+\beta(t)dW(t)+\int_{\mathbb{R}}zN(dt,dz)\Bigl]
\\ \label{risky}
        &=&
S(t)\Bigl[\Bigl(\alpha^{\Lambda}(t)+\sum_{j=1}^D\int_{\mathbb{R}}
z\langle\Lambda(t-),e_j\rangle\varepsilon_j(t)\nu_j(dz) \Bigl) dt+\beta(t)dW(t)
+\int_{\mathbb{R}}z\tilde{N}_{\Lambda}(dt,dz)\Bigl]\,,
\end{eqnarray}
with initial value $S(0)=s>0$. We suppose that the instantaneous interest rate
$r(t)$ and the appreciation rate $\alpha(t)$ are modulated by the Markov chain
$\Lambda$, as follows:
\begin{eqnarray*}
  r(t) &:=& \langle{\bf
r}(t),\Lambda(t)\rangle=\sum_{j=1}^Dr_j(t)\langle\Lambda(t),e_j\rangle\,, \\
  \alpha^{\Lambda}(t) &:=& \langle
\alpha(t),\Lambda(t)\rangle=\sum_{j=1}^D\alpha_j(t)\langle\Lambda(t),e_j\rangle\
,,
\end{eqnarray*}
where $r_j$ and $\alpha_j$ represent the interest and appreciation rates
respectively, when the Markov chain is in state $e_j$ of the economy. We suppose
that ${\bf r}(t)$ and $\alpha(t)$ are $\mathbb{R}^D$-valued $\mathcal{F}_t$-
predictable and uniformly bounded processes on the probability space
$(\Omega,\mathcal{F},\mathbb{P})$. Otherwise,  the volatility rate $\beta(t)$
is an $\mathcal{F}_t$-adapted uniformly bounded process. Note that we may
consider a Markov modulated volatility process, however it would lead in a
complicated, if not possible filtering issue in the following section. As was
pointed out by Siu \cite{siu2014} and references therein, the other reason is
that, the volatility can be determined from a price path of the risky share,
i.e., the volatility is observable.

We now model the insurance risk by a Markov regime-switching pure jump process
on the probability space $(\Omega,\mathcal{F},\mathbb{P})$. We follow the
modeling framework of Elliott and Siu \cite{elliott}, Siu \cite{siu}, Pamen and
Momeya \cite{pamen2017}.

Consider a real valued pure jump process $Z:=\{Z(t);\,\,t\in[0,T]\}$ defined on a
probability space $(\Omega,\mathcal{F},\mathbb{P})$, where $Z$ denotes the
aggregate amount of the claims up to time $t$. Then, we can write $Z$ as
$$
Z(t)=\sum_{0<s\leq t}\Delta Z(s)\,; \ \ \ Z(0)=0, \ \ \mathbb{P}-\rm{a.s.}, \ \
t\in[0,T]\,,
$$
where $\Delta Z(s):=Z(s)-Z(s-)$, for each $s\in[0,T]$, represents the jump size
of $Z$ at time $s$.

Suppose that the state space of the claim size $\mathcal{Z}$ is $(0,\infty)$.
Consider a random measure $N^0(\cdot,\cdot)$ defined on a product space
$[0,T]\times\mathcal{Z}$, which selects the random claim arrivals at time $s$. The aggregate insurance claim process $Z$ can be
written as
$$
Z=\int_0^t\int_0^\infty zN^0(ds,dz); \ \ \ t\in[0,T]\,.
$$
Define, for each $t\in[0,T]$,
$$
M(t):=\int_0^t\int_0^\infty N^0(ds,dz); \ \ \ t\in[0,T]\,.
$$
$M(t)$ counts the number of claim arrivals up to time $t$. Suppose that under
$\mathbb{P}$, $M:=\{M(t),\,t\in[0,T]\}$ is a conditional Poisson process on
$(\Omega,\mathcal{F},\mathbb{P})$, given the information about the realized path of the chain, with intensity $\lambda^{\Lambda}(t)$
modulated by the Markov chain given by
$$
\lambda^{\Lambda}(t):=\langle\lambda(t),\Lambda(t)\rangle=\sum_{j=1}
^D\lambda_j\langle\Lambda(t),e_j\rangle\,,
$$
where $\lambda_j$ is the $j$th entry of the vector $\lambda$ and represents the
intensity rate of $M$ when the Markov chain is in the state space $e_j$.

Let $f_j(z)$, $j=1,\ldots,D$ be the probability density function of the chain
size $z=Z(s)-Z(s-)$, when $\Lambda(t-)=e_j$. Then the Markov regime-switching
compensator of the random measure $N^0(\cdot,\cdot)$ under $\mathbb{P}$, is
given by
$$
\nu^0_{\Lambda}(ds,dz):=\sum_{j=1}^D\langle\Lambda(s-),
e_j\rangle\lambda_j(s)f_j(dz)ds\,.
$$
Therefore, a compensated version of the random measure is given by
$$
\tilde{N}^0_{\Lambda}(ds,dz)=N^0(ds,dz)-\nu^0_{\Lambda}(ds,dz)\,.
$$

We suppose that $\tilde{N}^0_{\Lambda}$ is independent of $W$ and
$\tilde{N}_{\Lambda}$.

Let $p(t)$ be the premium rate at time $t$. We suppose that the premium rate
process $\{p(t),\, t\in[0,T]\}$ is $\mathcal{F}_t$-progressively measurable and uniformly
bounded process on $(\Omega,\mathcal{F},\mathbb{P})$, taking values on
$(0,\infty)$. Let $R:=\{R(t),\,t\in[0,T]\}$ be the insurance risk process of the
insurance company without investment. Then, $R(t)$ is given by
\begin{eqnarray*}
  R(t) &:=& r_0+\int_0^tp(s)ds-Z(t) \\
   &=& r_0+\int_0^tp(s)ds-\int_0^t\int_0^\infty zN^0(ds,dz)\,.
\end{eqnarray*}

Let $\pi(t)$ be the amount of the money invested in the risky asset at time $t$. We denote the surplus process by $X(t)$,
then we formulate the surplus process with delay, which is caused by the capital
inflow/outflow function from the insurer's current wealth. We suppose that the
capital inflow/outflow function is given by
$$
\varphi(t,X(t),\bar{Y}(t),U(t))=(\vartheta(t)+\xi)X(t)-\vartheta(t)\bar{Y}
(t)-\xi U(t)\,,
$$
where $\vartheta(t)\geq0$ is uniformly bounded function of $t$, $\xi\geq0$ is a
constant and
$$
Y(t)=\int_{t-\varrho}^te^{\zeta (s-t)}X(s)dW_1(s)\,; \ \ \ \ \
\bar{Y}(t)=\frac{Y(t)}{\int_{t-\varrho}^te^{\zeta (s-t)}ds}\,; \ \ \
U(t)=X(t-\varrho)\,.
$$

Here, $Y,\,\bar{Y},\,U$ represents respectively the  integrated, average and
pointwise delayed information of the wealth process in the interval
$[t-\varrho,t]$. $\zeta\geq0$ is the average parameter and $\varrho\geq0$ the
delay parameter. $W_1$ is an independent Brownian motion.

The parameters $\vartheta$ and $\xi$ represent the weights proportional to the past performance of $X-\bar{Y}$ and $X-U$, respectively. A good performance $(\varphi>0)$, may bring to the insurer more wealth, so that he can pay part of the wealth to the policyholders. Otherwise, a bad performance $(\varphi<0)$ may forces the insurer to use the reserve or look for further capital in the market to cover the losses in order to achieve the final performance. \\

\noindent {\bf Remark.} According to the definition of our capital
inflow/outflow function, we take a noisy memory into account, thus generalizing
the inflow/outflow function considered in Shen and Zeng \cite{Shen}. To the best
of our knowledge, this kind of noisy delay has just been applied in a stochastic
control problem recently by Dahl {\it et. al.} \cite{Dahl} using a maximum
principle techniques with Malliavin derivatives. Unlike in Dahl {\it et. al.} \cite{Dahl}, we suppose that the noisy delay is derived by an independent Brownian motion. We believe that this assumption is more realistic since the delay of the information may not be caused by the same source of randomness as the one driving the stock price. Furthermore, when the delay is driven by the same noisy with the asset price, the filtering theory we apply in the next section, fails to turn the model into one with complete observations, as the dynamics of $Y(t)$ would still be dependent on some hidden parameters. Under derivative pricing, such kind of delays have been applied to consider some stochastic volatility models, but with the delay driven by independent Poisson process, see, e.g., Swishchuk \cite{swishchuk}.  \\

Note that we can write the noisy memory information $Y$ in a differential form by
\begin{eqnarray}\label{integrated}
dY(t) &=& -\zeta Y(t)dt+X(t)dW_1(t)-e^{-\zeta\varrho}X(t-\varrho)dW_1(t-\varrho) \\ \nonumber
 &=& -\zeta Y(t)dt +X(t)(1-e^{-\zeta\varrho}\chi_{[0,T-\varrho]})dW_1(t) \ \ \
t\in[0,T]\,,
\end{eqnarray}
where $\chi_A$ denotes the characteristic function defined in a set $A$.

Then, the surplus process of the insurer is given by the following stochastic
delay differential equation (SDDE) with regime-switching

\begin{eqnarray}\label{surplus}
  &&dX(t)\\ \nonumber
&=& [p(t)+r(t)X(t)+\pi(t)(\alpha^{\Lambda}(t)-r(t))-\varphi(t,X(t),\bar{Y}(t),U(t))]
dt  \\\nonumber
   && +\pi(t)\beta(t)dW(t) +\pi(t)\int_{\mathbb{R}}zN(dt,dz)-\int_0^\infty
zN^0(dt,dz) \\ \nonumber
   &=&
\Bigl[p(t)+(r(t)-\vartheta(t)-\xi)X(t)+\pi(t)(\alpha^{\Lambda}(t)-r(t))+{
\vartheta}(t)\bar{Y}(t)  \\ \nonumber
   &&  +\xi U(t) +\sum_{j=1}^D\langle\Lambda(t-),e_j\rangle
\Bigl(\pi(t)\int_{\mathbb{R}}z\varepsilon_j(t)\nu_j(dz)
-\int_0^\infty\lambda_j(t)zf_j(dz)\Bigl)\Bigl]dt  \\ \nonumber
   && +\pi(t)\beta(t)dW(t) +\pi(t)\int_{\mathbb{R}}z\tilde{N}_{\Lambda}(dt,dz)
-\int_0^{\infty}z\tilde{N}^0_{\Lambda}(dt,dz)\,, \ \ \ t\in[0,T]\,, \\ \nonumber
   X(t) &=& x_0>0, \ \ \ t\in[-\varrho,0]\,.
\end{eqnarray}

The portfolio process $\pi(t)$ is said to be admissible if it satisfies the
following:
\begin{enumerate}
  \item $\pi(t)$ is $\mathcal{F}_t$-progressively measurable and
$\int_0^T|\pi(t)|^2dt<\infty$, $\mathbb{P}$-a.s.
  \item The SDDE \eqref{surplus} admits a unique strong solution;
  \item \begin{eqnarray*}
           &&
\sum_{j=1}^D\Bigl\{
\int_0^T|p(t)+(r_j(t)-\vartheta(t)-\xi)X(t)+\pi(t)(\alpha_j(t)-r_j(t))+{
\vartheta}(t)\bar{Y}(t) +\xi U(t)|dt \\
           && +\int_0^T\left[\pi^2(t)\beta^2(t)+
\int_{\mathbb{R}}(\pi(t))^2(t)z^2\varepsilon_j(t)\nu_j(dz)
+\int_0^{\infty}z^2\lambda_j(t)f_j(dz)\right]dt\Bigl\} \,\,<\, \infty\,~~\mathbb{P}-a.s.
        \end{eqnarray*}
\end{enumerate}
We denote the space of admissible investment strategy by $\mathcal{A}$.

\section{Reduction by the filtering theory}
As we are working with an unobservable Markov regime-switching model, one needs
to reduce the model into one with complete observations. We adopt the filtering
theory for this reduction. This is a classical approach and it has been widely
applied in stochastic control problems. See, for example, B\"auerle and Rieder
\cite{rieder}, Elliott {\it et. al.} \cite{elliott2008}, Elliott and Siu
\cite{elliott}, Siu \cite{siu2012}, and references therein. We proceed as in Siu
\cite{siu2014}.

Consider the following $\mathcal{F}_t$-adapted process
$\widehat{W}:=\{\widehat{W}(t),\, t\in[0,T]\}$ defined by
$$\widehat{W}(t):=W(t)+\int_0^t\frac{\alpha^{\Lambda}(s)-\hat{\alpha}^{\Lambda}
(s)}{\beta(s)}ds\,, \ \ \ t\in[0,T]\,,$$
where $\hat{\alpha}$ is the optional projection of $\alpha$ under $\mathbb{P}$, with respect to the filtration $\mathcal{F}_t$,
i.e.,
$\hat{\alpha}^{\Lambda}(t)=\mathbb{E}[\alpha^{\Lambda}(t)\mid\mathcal{F}_t]$, $\mathbb{P}$-a.s..
Then it was shown that $\widehat{W}$ is a Brownian motion. See e.g., Elliott and
Siu \cite{elliott} or Kallianpur \cite{kallianpur}, Lemma 11.3.1.

Let $\hat{\Lambda}$ be the optional projection of the Markov chain $\Lambda$.
For the jump part of the risk share $N$ and the insurance risk $N^0$, we
consider the following:
$$
\hat{\nu}(dt,dz):=\sum_{j=1}^D\langle\hat{\Lambda}(t-),
e_j\rangle\varepsilon_j(t)\nu_j(dz)dt\,\,\,\,
\rm{and}\,\,\,\,
\hat{\nu}^0(dt,dz):=\sum_{j=1}^D\langle\hat{\Lambda}(t-),
e_j\rangle\lambda_j(t)\nu_j^0(dz)dt\,.
$$
Define the compensated random measures $\widehat{N}(dt,dz)$ and
$\widehat{N}^0(dt,dz)$ by
\begin{eqnarray*}
  \widehat{N}(dt,dz) &:=& N(dt,dz)- \hat{\nu}(dt,dz) \\
  \widehat{N}^0(dt,dz) &:=& N^0(dt,dz)-\hat{\nu}^0(dt,dz)\,.
\end{eqnarray*}
Then, it can be shown that the following processes are martingales. (See Elliott
\cite{elliott1990}):
\begin{eqnarray*}
  \widehat{M} &:=& \int_0^t\int_{\mathbb{R}}z\widehat{N}(dt,dz) \\
  \widehat{M}^0 &:=&  \int_0^t\int_0^{\infty}z\widehat{N}^0(dt,dz)\,.
\end{eqnarray*}
Therefore, the surplus process $X(t)$ can be
written, under $\mathbb{P}$, as:
\begin{eqnarray}\label{surplushat}
&&dX(t)\\ \nonumber
&=&
\Bigl[p(t)+(r(t)-\vartheta(t)-\xi)X(t)+\pi(t)(\hat{\alpha}^{\Lambda}
(t)-r(t))+{\vartheta}(t)\bar{Y}(t)  \\ \nonumber
   &&  -\xi U(t) +\sum_{j=1}^D\langle\hat{\Lambda}(t-),e_j\rangle
\Bigl(\pi(t)\int_{\mathbb{R}}z\varepsilon_j(t)\nu_j(dz)
-\int_0^\infty\lambda_j(t)zf_j(dz)\Bigl)\Bigl]dt  \\ \nonumber
   && +\pi(t)\beta(t)d\widehat{W}(t)
+\pi(t)\int_{\mathbb{R}}z\widehat{N}_{\Lambda}(dt,dz)
-\int_0^{\infty}z\widehat{N}^0_{\Lambda}(dt,dz)\,, \ \ \ t\in[0,T]\,, \\
\nonumber
   X(t) &=& x_0>0, \ \ \ t\in[-\varrho,0]\,.
\end{eqnarray}

We then use the reference probability approach to derive a filtered estimate
$\hat{\Lambda}$ of the Markov chain $\Lambda$ following the discussions in Siu
\cite{siu2014}.

Let $\varphi(t)\in\mathbb{R}^D$, such that
$\varphi_j(t)=\alpha_j(t)-\frac{1}{2}\beta^2(t)$, $j=1,2,\ldots,D$. Define, for
any $t\in[0,T]$, the following functions
\begin{eqnarray*}
  \Psi_1(t) &:=& \int_0^t\langle\varphi(s),\Lambda(s)\rangle
ds+\int_0^t\beta(s)dW(s); \\
  \Psi_2(t) &:=& \int_0^t\int_{\mathbb{R}}zN(ds,dz); \\
  \Psi_3(t) &:=&  \int_0^t\int_0^{\infty}zN^0(ds,dz).
\end{eqnarray*}
Write $\mathbb{P}^*$, for a probability measure on $(\Omega,\mathcal{F})$, on
which the observation process does not depend on the Markov chain $\Lambda$. Define, for each
$j=1,2,\ldots,D$,
$$
F_j(t,z):=\frac{\lambda_j(t)f_j(dz)}{f(dz)} \ \ \ \ \rm{and} \ \ \ \
\mathcal{E}_j(t,z):=\frac{\varepsilon_j(t)\nu_j(dz)}{\nu(dz)}\,.
$$
Consider the following $\mathcal{F}_t$-adapted processes $\Gamma_1,\,\Gamma_2$
and $\Gamma_3$ defined by putting
\begin{eqnarray*}
  \Gamma_1(t) &:=&
\exp\Bigl(\int_0^t\beta^{-2}(s)\langle\varphi(s),\Lambda(s)\rangle d\Psi_1(s)
-\frac{1}{2}\int_0^t\beta^{-4}(s)\langle\varphi(s),\Lambda(s)\rangle^2ds\Bigl)\,
; \\
  \Gamma_2(t) &:=&
\exp\Bigl[-\int_0^t\sum_{j=1}^D\langle\Lambda(s-),e_j\rangle\left(\int_{\mathbb{
R}} (\mathcal{E}_j(s,z)-1)\nu(dz)\right)ds  \\
   && +\int_0^t\int_{\mathbb{R}} \left(\sum_{j=1}^D\langle\Lambda(s-),e_j\rangle
\ln(\mathcal{E}_j(s,z))\right)N(ds,dz)\Bigl]\,; \\
   \Gamma_3(t) &:=&
\exp\Bigl[-\int_0^t\sum_{j=1}^D\langle\Lambda(s-),e_j\rangle\left(\int_0^{\infty
} (F_j(s,z)-1)f(dz)\right)ds  \\
   && +\int_0^t\int_0^{\infty} \left(\sum_{j=1}^D\langle\Lambda(s-),e_j\rangle
\ln(F_j(s,z))\right)N^0(ds,dz)\Bigl]\,.
\end{eqnarray*}
Consider the $\mathcal{F}_t$-adapted process $\Gamma:=\{\Gamma(t),\,t\in[0,T]\}$
defined by
$$
\Gamma(t):=\Gamma_1(t)\cdot\Gamma_2(t)\cdot\Gamma_3(t).
$$
Note that the process $\Gamma$ is a local martingale and
$\mathbb{E}[\Gamma(T)]=1$. Under some strong assumptions, It can be shown that
$\Gamma$ is a true martingale. See, for instance, Proposition 2.5.1 in Delong
\cite{Delong}.

The main goal of the filtering process is to evaluate the
$\mathcal{F}_t$-optional projection of the Markov chain $\Lambda$ under
$\mathbb{P}$. To that end, let, for each $t\in[0,T]$,
$$
{\bf q}(t):=\mathbb{E}^*[\Gamma(t)\Lambda(t)\mid\mathcal{F}_t]\,,
$$
where $\mathbb{E}^*$ is an expectation under the reference probability measure
$\mathbb{P}^*$. The process ${\bf q}(t)$ is called an unnormalized filter of
$\Lambda(t)$.

Define, for each $j=1,2,\ldots,D$ the scalar valued process
$\gamma_j:=\{\gamma_j(t),\,t\in[0,T]$ by
\begin{eqnarray*}
  \gamma_j(t) &:=&
\exp\Bigl(\int_0^t\varphi_j(s)\beta^{-2}(s)d\Psi_1(s)-\frac{1}{2}
\int_0^t\varphi_j^2(s)\beta^{-4}(s)ds +\int_0^t(1-\varepsilon_j(s))ds \\
   && +\int_0^t(1-f_j(s))ds +\int_0^t\ln(\mathcal{E}_j(s))dN(s)
+\int_0^t\ln(F_j(s))dN^0(s)\Bigl)\,.
\end{eqnarray*}
Consider a diagonal matrix $\mathbf{L}(t):=\rm{{\bf
diag}}(\gamma_1(t),\gamma_2(t),\ldots,\gamma_D(t))$, for each $t\in[0,T]$.
Define the transformed unnormalized filter $\{\bar{\mathbf{q}}(t),\,t\in[0,T]\}$
by
$$
\bar{\mathbf{q}}(t):=\mathbf{L}^{-1}(t)\mathbf{q}(t)\,.
$$
Note that the existence of the inverse $\mathbf{L}^{-1}(t)$ is guaranteed by the
definition of $\mathbf{L}(t)$ and the positivity of $\gamma_j(t)$,
$j=1,2,\ldots,D$.

Then, it has been shown (see Elliott and Siu \cite{elliott}), that the
transformed unnormalized filter $\bar{\mathbf{q}}$ satisfies the following
linear order differential equation
$$
\frac{d\bar{\mathbf{q}}(t)}{dt}:=\mathbf{L}^{-1}(t)A(t)\mathbf{L}(t)\bar{\mathbf
{q}}(t)\,, \ \ \ \ \ \bar{\mathbf{q}}(0)=\mathbf{q}(0)=\mathbb{E}[\Lambda(0)]\,.
$$
Hence, by a version of the Bayes rule, the optimal estimate $\hat{\Lambda}(t)$
of the Markov chain $\Lambda(t)$ is given by
$$
\hat{\Lambda}:=\mathbb{E}[\Lambda(t)\mid\mathcal{F}_t]=\frac{\mathbb{E}^*[
\Gamma(t)\Lambda(t)
\mid\mathcal{F}_t]}{\mathbb{E}^*[\Gamma(t)\mid\mathcal{F}_t]}=\frac{\mathbf{q}
(t)}{\langle\mathbf{q}(t),\mathbf{1}\rangle}\,.
$$

\section{Risk-based optimal investment problem}

In this section, we introduce the optimal investment problem of an insurer with
regime-switching and delay. We consider a problem where the objective is to
minimize the risk described by the convex risk measure, with the insurer not only
concerned with the terminal wealth, but also with the integrated noisy memory
surplus over the period $[T-\varrho,T]$. This \textbf{problem} is then described
as follows:
 Find the investment strategy $\pi(t)\in\mathcal{A}$ which minimizes the risks
of the terminal surplus and the integrated surplus, i.e., $X(T)+\kappa Y(T)$,
where $\kappa\geq0$ denotes the weight between $X(T)$ and $Y(T)$. This allows us to incorporate the terminal wealth as well as the delayed wealth at the terminal time $T$ in the performance functional.

Since we are dealing with a measure of risk, we will use the concept of convex
risk measures introduced in F\"ollmer and Schied \cite{follmer} and Frittelli
and Rosazza \cite{frittelli}. Which is the generalization of the concept of
coherent risk measures proposed by Artzner {\it et. al.} \cite{artzner}.

\begin{definition}
Let $\mathcal{S}$ be a space of all lower bounded $\{\mathcal{G}_t\}_{t\in[0,T]}$-measurable
random variables. A convex risk measure on $\mathcal{S}$ is a map
$\rho:\mathcal{S}\rightarrow\mathbb{R}$ such that:
\begin{enumerate}
  \item  ({\it translation}) If $\epsilon\in\mathbb{R}$ and $\chi\in\mathcal{S}$, then $\rho(\chi+\epsilon)=\rho(\chi)-\epsilon$;
  \item  ({\it monotonicity}) For any $\chi_1,\chi_2\in\mathcal{S}$, if $\chi_1(\omega)\leq \chi_2(\omega)$; $\omega\in\Omega$, then $\rho(\chi_1)\geq\rho(\chi_2)$;
  \item  ({\it convexity}) For any $\chi_1,\chi_2\in\mathcal{S}$ and $\varsigma\in(0,1)$,
  $$
  \rho(\varsigma \chi_1+(1-\varsigma)\chi_2)\leq\varsigma\rho(\chi_1)+(1-\varsigma)\rho(\chi_2)\,.
  $$
\end{enumerate}

\end{definition}

Following the general representation of the convex risk measures (see e.g.,
Theorem 3, Frittelli and Rasozza \cite{frittelli}), also applied by Mataramvura
and \O ksendal \cite{Mataramvura}, Elliott and Siu \cite{Elliott2011}, Meng and
Siu \cite{Meng}, among others, we assume that the risk measure $\rho$ under
consideration in this paper, is as follows:
$$
\rho(\chi)=\sup_{Q\in\mathcal{M}_a}\{\mathbb{E}^{\mathbb{Q}}[-\chi]-\eta(Q)\},
$$
where $\mathbb{E}^{\mathbb{Q}}$ is the expectation under $\mathbb{Q}$, for the
family $\mathcal{M}_a$ of probability measures and for some penalty function
$\eta:\mathcal{M}_a\rightarrow\mathbb{R}$.

In order to specify the penalty function, we first describe a family
$\mathcal{M}_a$ of all measures $Q$ of Girsanov type. We consider a robust
modeling setup, given by a probability measure $\mathbb{Q}:=Q^{\theta_0,
\theta_1,\theta_2}$, with the Radon-Nikodym derivative given by
$$
\frac{d\mathbb{Q}}{d\mathbb{P}}\Bigl|_{\mathcal{F}_t}=G^{\theta_0,\theta_1,\theta_2}(t)\,
, \ \ \ 0\leq t\leq T\,.
$$

The Radon-Nikodym $G^{\theta_0,\theta_1,\theta_2}(t)\,, \ \ \ t\in[0,T+\varrho],$ is
given by

\begin{eqnarray} \label{martingalemeasure}
  dG^{\theta_0,\theta_1,\theta_2}(t) &=& G^{\theta_0,\theta_1,\theta_2}(t^-)\Bigl[\theta_0(t)d\widehat{W}(t) +\theta_1(t)dW_1(t) +\int_0^{\infty}\theta_0(t)\widehat{N}^0_{\Lambda}(dt,dz) \\ \nonumber
  &&  +\int_{\mathbb{R}}\theta_2(t,z)\widehat{N}_{\Lambda}(dt,dz)\Bigl]\,, \\ \nonumber
  G^{\theta_0,\theta_1,\theta_2}(0) &=& 1, \\ \nonumber
  G^{\theta_0,\theta_1,\theta_2}(t) &=& 0\,, \ \ \ t\in[-\varrho,0)\,.
\end{eqnarray}
The set $\Theta:=\{\theta_0,\theta_1,\theta_2\}$ is considered as a set of scenario
control. We say that $\Theta$ is admissible if $\theta_2(t,z)>-1$ and
$$
\mathbb{E}\left[\int_0^T\Bigl\{\theta_0^2(t)+\theta_1^2(t)+\int_{\mathbb{R}}\theta_2^2(t,
z)\nu_{\Lambda}(dz)\Bigl\}dt\right]<\infty\,.
$$
Then, the family $\mathcal{M}_a$ of probability measures  is given by
$$
\mathcal{M}_a:=\mathcal{M}(\Theta)=\{\mathbb{Q}^{\theta_0,\theta_1,\theta_2}\,:
\,(\theta_0,\theta_1,\theta_2)\in\Theta\}\,.
$$

Let us now specify the penalty function $\eta$. Suppose that for each $(\pi,
\theta_0,\theta_1,\theta_2)\in\mathcal{A}\times\Theta$ and $t\in[0,T]$,
$\pi(t)\in\mathbf{U}_1$ and $\theta(t)=(\theta_0(t),\theta_1(t),\theta_2(t,\cdot))\in\mathbf{U_2}$, where
$\mathbf{U}_1$ and $\mathbf{U}_2$ are compact metric spaces in $\mathbb{R}$ and
$\mathbb{R}^3$.

Let
$\ell:[0,T]\times\mathbb{R}\times\mathbb{R}\times\mathbb{R}\times\mathbf{U}
_1\times\mathbf{U}_2 \rightarrow\mathbb{R}$ and
$h:\mathbb{R}\times\mathbb{R}\rightarrow\mathbb{R}$ be two bounded measurable
convex functions in $\theta(t)\in\mathbf{U}_2$ and
$(X(T),Y(T))\in\mathbb{R}\times\mathbb{R}$, respectively. Then, for each
$(\pi,\theta)\in\mathcal{A}\times\Theta$,
$$
\mathbb{E}\left[\int_0^T|\ell(t,X(t),Y(t),Z(t),\pi(t),\theta_0(t),
\theta_1(t),\theta_2(t,\cdot))|dt+ |h(X(T),Y(T))|\right]<\infty\,.
$$

As in Mataramvura and \O ksendal \cite{Mataramvura}, we consider, for each
$(\pi,\theta)\in\mathcal{A}\times\Theta$, a penalty function $\eta$ of the form

$$
\eta(\pi,\theta_0,\theta_1,\theta_2):=\mathbb{E}\left[\int_0^T\ell(t,X(t),Y(t),Z(t),
\pi(t),\theta_0(t),\theta_1(t),\theta_2(t,\cdot))dt+ h(X(T),Y(T))\right]\,.
$$

Then, we define a convex risk measure for the terminal wealth and the integrated
wealth of an insurer, i.e., $X(T)+\kappa Y(T)$, for $\kappa\geq0$, given the
information $\mathcal{F}_t$ associated with the family of probability measures
$\mathcal{M}_a$  and the penalty function $\eta$, as follows:

$$
\rho(X(T),Y(T)):=\sup_{(\theta_0,\theta_1,\theta_2)\in\Theta}\left\{\mathbb{E}^{\mathbb{Q
}}[-(X^{\pi}(T)+\kappa Y^{\pi}(T))]-\eta(\pi,\theta_0,\theta_1,\theta_2)\right\}\,.
$$

As in Elliott and Siu \cite{Elliott2011}, the main objective of  the insurer is
to select the optimal investment process $\pi(t)\in\mathcal{A}$ so as to
minimizes the risks described by $\rho(X(T),Y(T))$. That is, the optimal problem
of an insurer is:
\begin{equation}\label{objectivefunctional}
\mathcal{J}(x):=\inf_{\pi\in\mathcal{A}}\left\{\sup_{(\theta_0,
\theta_1,\theta_2)\in\Theta} \left\{\mathbb{E}^{\mathbb{Q}}[-(X^{\pi}(T)+\kappa
Y^{\pi}(T))]-\eta(\pi,\theta_0,\theta_1,\theta_2)\right\}\right\}\,.
\end{equation}

Note that $\mathbb{E}^{\mathbb{Q}}[-(X^{\pi}(T)+\kappa
Y^{\pi}(T))]=\mathbb{E}[-(X^{\pi}(T)+\kappa
Y^{\pi}(T))G^{\theta_0,\theta_1}(T)]$ (See Cuoco \cite{cuoco} or Karatzas and
Shreve \cite{karatzas} for more details). Then from the form of the penalty
function,

\begin{eqnarray*}
\bar{\mathcal{J}}(x) &=& \inf_{\pi\in\mathcal{A}}\sup_{(\theta_0,\theta_1,\theta_2)\in\Theta}
\mathbb{E}\Bigl[-(X^{\pi}(T)+\kappa Y^{\pi}(T))G^{\theta_0,\theta_1,\theta_2}(T) \\
&& -\int_0^T\ell(t,X(t),Y(t),Z(t),\pi(t),\theta_0(t),\theta_1(t),\theta_2(t,\cdot))dt-
h(X(T),Y(T))\Bigl] \\
&=& \mathcal{J}(x), \ \ \ \rm{say}.
\end{eqnarray*}

For each $(\pi,\theta)\in\mathcal{A}\times\Theta$, suppose that

\begin{eqnarray*}
\mathcal{V}^{\pi,\theta}(x) &:=& \mathbb{E}\Bigl[-(X^{\pi}(T)+\kappa
Y^{\pi}(T))G^{\theta_0,\theta_1,\theta_2}(T) \\
&& -\int_0^T\ell(t,X(t),Y(t),Z(t),\pi(t),\theta_0(t),\theta_1(t),\theta_2(t,\cdot))dt-
h(X(T),Y(T))\Bigl]\,.
\end{eqnarray*}
Then,
$$
\mathcal{J}(x)= \inf_{\pi\in\mathcal{A}}\sup_{(\theta_0,\theta_1,\theta_2)\in\Theta}
\mathcal{V}^{\pi,\theta}(x)=\mathcal{V}^{\pi^*,\theta^*}(x)\,,
$$
that is, the insurer selects an optimal investment strategy $\pi$ so as to
minimize the maximal risks, whilst the market reacts by selecting a probability
measure indexed by $((\theta_0,\theta_1,\theta_2))\in\Theta$ corresponding to the
worst-case scenario, where the risk is maximized. To solve this game problem,
one must select the optimal strategy $(\pi^*,\theta_0^*,\theta_1^*,\theta_2^*)$ from the
insurer and the market, respectively, as well as the optimal value function
$\mathcal{J}(x)$.

\section{The BSDE approach to a game problem}

In this section, we solve the risk-based optimal investment problem of an insurer using delayed BSDE with jumps. Delayed BSDEs may arise in insurance and finance, when one wants to find an investment strategy which should replicate a liability or meet a purpose depending on the past values of the portfolio. For instance, under participating contracts in life insurance endowment contracts, we have a so called {\it performance-linked payoff}, that is, the payoff from the policy is related to the performance of the portfolio held by the insurer. Thus, the current portfolio and the past values of the portfolio have an impact on the final value of the liability. For more discussions on this and more applications of delayed BSDEs see Delong \cite{delongl2010}.

We first consider the following notation in order to establish the existence and uniqueness result of a delayed BSDE with jumps.
\begin{itemize}
\item $\mathbb{L}^2_{-\varrho}(\mathbb{R})$- the space of measurable functions
$k:[-\varrho,0]\mapsto\mathbb{R}$, such that $\int_{-\varrho}^0|k(t)|^2dt<\infty$;
\item $\mathbb{S}^2_{-\varrho}(\mathbb{R})$- the space of bounded measurable functions
$y:[\varrho,0]\mapsto\mathbb{R}$ such that
$$\sup|y(t)|^2<\infty\,;$$
\item $\mathbb{H}_{-\varrho,\nu}^2$- the space of product measurable functions
$\upsilon:[-\varrho,0]\times\mathbb{R}$, such that
\begin{equation*}
\int_0^T\int_{\mathbb{R}}|\upsilon(t,z)|^2\nu(dz)dt
<\infty\,;
\end{equation*}
  \item $\mathbb{L}^2(\mathbb{R})$- the space of random variables
$\xi:\Omega\mapsto\mathbb{R}$, such that $\mathbb{E}[\,|\xi|^2]<\infty$;
  \item $\mathbb{H}^2(\mathbb{R})$- the space of measurable functions
$K:\mathbb{R}\mapsto\mathbb{R}$ such that
$$\mathbb{E}\left[\int_{\mathbb{R}}|K(t)|^2dt\right]<\infty\,;$$
  \item $\mathbb{S}^2(\mathbb{R})$- the space of adapted c\`adl\`ag processes
$Y:\Omega\times[0,T]\mapsto\mathbb{R}$ such that
$$\mathbb{E}[\sup|Y(t)|^2]<\infty$$ and
  \item $\mathbb{H}_\nu^2$- the space of predictable processes
$\Upsilon:\Omega\times[0,T]\times\mathbb{R}\mapsto\mathbb{R}$, such that
\begin{equation*}
\mathbb{E}\left[\int_0^T\int_{\mathbb{R}}|\Upsilon(t,z)|^2\nu(dz)dt\right]
<\infty.
\end{equation*}

\end{itemize}

Define the following delayed BSDE with jumps:
\begin{eqnarray}\label{bsde}
  d\mathcal{Y}(t) &=& -\mathcal{W}(t,\pi(t),\theta(t))dt+K_1(t)d\widehat{W}(t)+K_2(t)dW_1(t) \\ \nonumber
   && +\int_{\mathbb{R}}\Upsilon_1(t,z)\widehat{N}_{\Lambda}(dt,dz)
+\int_0^{\infty}\Upsilon_2(t,z)\widehat{N}_{\Lambda}^0(dt,dz) ; \\ \nonumber
  \mathcal{Y}(T) &=& h(X(T),Y(T))\,,
\end{eqnarray}
where
\begin{eqnarray*}
\mathcal{W}(t,\pi(t),\theta(t)) &:=& \mathcal{G}(t, \mathcal{Y}(t),\mathcal{Y}(t-\varrho), K_1(t), K_1(t-\varrho),K_2(t), K_2(t-\varrho),
\Upsilon_1(t,\cdot), \\
&&  \ \ \ \ \ \Upsilon_1(t-\varrho,\cdot),\Upsilon_2(t,\cdot), \Upsilon_2(t-\varrho,\cdot),\pi(t),\theta(t))\,.
\end{eqnarray*}
We assume that the generator $\mathcal{W}:\Omega\times[0,T]\times\mathbb{S}^2(\mathbb{R})\times\mathbb{S}^2_{-\varrho}(\mathbb{R}) \times\mathbb{H}^2(\mathbb{R})\times\mathbb{H}^2_{-\varrho}(\mathbb{R})\times\mathbb{H}_\nu^2(\mathbb{R}) \times\mathbb{H}_{-\varrho,\nu}^2(\mathbb{R})\mapsto\mathbb{R}$ satisfy the following Lipschtz continuous condition, i.e.,
there exists a constant $C>0$ and a probability measure $\eta$ on $([-\varrho,0],\mathcal{B}([-\varrho,0]))$ such that
\begin{eqnarray*}
  && \mathcal{W}(t,\pi(t),\theta(t))- \tilde{\mathcal{W}}(t,\pi(t),\theta(t)) \\
  &\leq& C\Bigl(\int_{-\varrho}^0|y(t+\zeta) -\tilde{y}(t+\zeta)|^2\eta(d\zeta) +\int_{-\varrho}^0|k_1(t+\zeta) -\tilde{k}_1(t+\zeta)|^2\eta(d\zeta)  \\
   && +\int_{-\varrho}^0|k_2(t+\zeta) -\tilde{k}_2(t+\zeta)|^2\eta(d\zeta) +\int_{-\varrho}^0\int_{\mathbb{R}}|\upsilon_1(t+\zeta,z) -\tilde{\upsilon}_1(t+\zeta,z)|^2\nu(dz)\eta(d\zeta)  \\
   && +\int_{-\varrho}^0\int_{\mathbb{R}}|\upsilon_2(t+\zeta,z) -\tilde{\upsilon}_2(t+\zeta,z)|^2\nu(dz)\eta(d\zeta) +\int_{0}^T|y(t) -\tilde{y}(t)|^2dt  \\
   && +\int_{0}^T|k_2(t) -\tilde{k}_2(t)|^2dt +\int_{0}^T\int_{\mathbb{R}}|\upsilon_1(t,z) -\tilde{\upsilon}_1(t,z)|^2\nu(dz)dt \\
   && +\int_{0}^T\int_{\mathbb{R}}|\upsilon_2(t,z) -\tilde{\upsilon}_2(t,z)|^2\nu(dz)dt\Bigl)\,.
\end{eqnarray*}
Then, if $h\in\mathbb{L}^2$ and the above Lipschitz condition is satisfied, one can prove  the existence and uniqueness solution $(\mathcal{Y},K_1,K_2,\Upsilon_1,\Upsilon_2)\in\mathbb{S}^2(\mathbb{R})\times\mathbb{H}^2(\mathbb{R})\times\mathbb{H}^2(\mathbb{R}) \times\mathbb{H}^2_{\nu}(\mathbb{R})\times\mathbb{H}^2_{\nu}(\mathbb{R})$ of a delayed BSDE with jumps \eqref{bsde}. See Delong and Imkeller \cite{delong2010} and Delong \cite{Delong} for more details. In practice, $\mathcal{Y}$ denotes a replicating portfolio, $K_1,K_2,\Upsilon_1,\Upsilon_2$ represent the replicating strategy, $h(X(T),Y(T))$ is a terminal liability and $\mathcal{G}$ models the stream liability during the contract life-time.

The key result for solving our delayed stochastic differential game problem is based on the following theorem.

\begin{theorem}\label{theorembsde}
Suppose that there exists a strategy $(\hat{\pi}(t),
\hat{\theta}(t))\in\mathbf{U}_1\times\mathbf{U}_2$ such that
\begin{eqnarray}\label{esssuph}
\mathcal{W}(t, y, k_1,k_2, \upsilon(\cdot), \hat{\pi}(t), \hat{\theta}(t)) &=&
\inf_{\pi\in\mathcal{A}}\sup_{(\theta_0,\theta_1,\theta_2)\in\Theta} \mathcal{G}(t, y, k_1,k_2,
\upsilon(\cdot), \pi, \theta) \\ \nonumber
&=& \sup_{(\theta_0,\theta_1,\theta_2)\in\Theta}\inf_{\pi\in\mathcal{A}} \mathcal{G}(t,
y, k_1,k_2, \upsilon(\cdot), \pi, \theta)\,,
\end{eqnarray}
that is, $\mathcal{W}$ satisfy the Isaac's condition.
Furthermore, suppose that there exists a unique solution
$(\mathcal{Y}^{\pi,\theta}(t), K_1^{\pi,\theta}(t),K_2^{\pi,\theta}(t),
\Upsilon_1^{\pi,\theta}(t,\cdot),\Upsilon_2^{\pi,\theta}(t,\cdot))\in\mathbb{S}^2(\mathbb{R})\times\mathbb{H}^2(\mathbb{R}) \times\mathbb{H}^2(\mathbb{R}) \times\mathbb{H}^2_{\nu}(\mathbb{R})\times\mathbb{H}^2_{\nu}(\mathbb{R})$ of the BSDE
\eqref{bsde}, for all $(\pi,\theta)\in\mathcal{A}\times\Theta$.

Then, the value function $\mathcal{J}(x)$ is given by
$\mathcal{Y}^{\hat{\pi},\hat{\theta}}(t)$. Moreover, the optimal strategy of the
problem \eqref{objectivefunctional} is given by

\begin{equation}\label{optimalgeneral}
\left\{
  \begin{array}{ll}
  \pi^*(t)=\hat{\pi}(t,Y(t),K_1(t),K_2(t),\Upsilon_1(t,\cdot),\Upsilon_2(t,\cdot))\,, &
\hbox{} \\

\theta^*(t)=\hat{\theta}(t,Y(t),K_1(t),K_2(t),\Upsilon_1(t,\cdot),\Upsilon_2(t,\cdot))\,.
& \hbox{}
  \end{array}
\right.
\end{equation}

\end{theorem}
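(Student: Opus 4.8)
The plan is to verify a verification-type theorem by comparing, along an arbitrary control pair $(\pi,\theta)\in\ContrSet\times\Theta$, the solution $\mathcal{Y}^{\pi,\theta}$ of the delayed BSDE \eqref{bsde} with the performance functional $\mathcal{V}^{\pi,\theta}(x)$, and then closing the loop at the saddle point $(\hat\pi,\hat\theta)$. First I would observe that, for a fixed admissible $(\pi,\theta)$, the quantity $\mathcal{V}^{\pi,\theta}(x)$ defined in Section~4 can itself be represented as (the time-zero value of) a delayed BSDE whose driver is $\mathcal{G}(t,\cdot,\pi(t),\theta(t))$ and whose terminal condition is $h(X(T),Y(T))$: indeed, setting
\[
\mathcal{Y}^{\pi,\theta}(t):=\E\Bigl[-(X^{\pi}(T)+\kappa Y^{\pi}(T))G^{\theta_0,\theta_1,\theta_2}(T)-\int_t^T\ell(s,\ldots)\,ds-h(X(T),Y(T))\Bigm|\mathcal{F}_t\Bigr]
\]
and applying the martingale representation theorem (for the Brownian motions $\widehat W,W_1$ and the compensated measures $\widehat N_\Lambda,\widehat N^0_\Lambda$) to the closed martingale on the right, together with the product rule on $(X^\pi(T)+\kappa Y^\pi(T))G^{\theta_0,\theta_1,\theta_2}(T)$, identifies the driver $\mathcal{W}(t,\pi(t),\theta(t))=\mathcal{G}(t,\mathcal{Y}(t),\mathcal{Y}(t-\varrho),K_1(t),\ldots,\pi(t),\theta(t))$ and the integrand processes $K_1,K_2,\Upsilon_1,\Upsilon_2$. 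So $\mathcal{Y}^{\pi,\theta}(0)=\mathcal{V}^{\pi,\theta}(x)$ for every $(\pi,\theta)$, by the assumed uniqueness of the BSDE solution.

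Next I would invoke the comparison principle for BSDEs with jumps (Delong \cite{Delong}, Delong and Imkeller \cite{delong2010}) in its delayed form. Fix the candidate $(\hat\pi,\hat\theta)$. By the Isaacs condition \eqref{esssuph}, for every admissible $\pi$ we have pointwise in $(t,\omega)$
\[
\mathcal{G}(t,y,k_1,k_2,\upsilon(\cdot),\pi,\hat\theta(t))\ \le\ \mathcal{W}(t,y,k_1,k_2,\upsilon(\cdot),\hat\pi(t),\hat\theta(t))\ \le\ \mathcal{G}(t,y,k_1,k_2,\upsilon(\cdot),\hat\pi(t),\theta),
\]
for all admissible $\theta$, with equality at $(\hat\pi,\hat\theta)$; both BSDEs share the terminal value $h(X(T),Y(T))$. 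Hence the comparison theorem gives $\mathcal{Y}^{\pi,\hat\theta}(t)\le\mathcal{Y}^{\hat\pi,\hat\theta}(t)\le\mathcal{Y}^{\hat\pi,\theta}(t)$ for all $t$, in particular at $t=0$, so that
\[
\sup_{\theta\in\Theta}\mathcal{V}^{\hat\pi,\theta}(x)\le\mathcal{Y}^{\hat\pi,\hat\theta}(0)\le\inf_{\pi\in\ContrSet}\mathcal{V}^{\pi,\hat\theta}(x).
\]
Since trivially $\inf_\pi\sup_\theta\mathcal{V}^{\pi,\theta}\ge\inf_\pi\mathcal{V}^{\pi,\hat\theta}$ and $\inf_\pi\sup_\theta\mathcal{V}^{\pi,\theta}\le\sup_\theta\mathcal{V}^{\hat\pi,\theta}$, the two displayed inequalities squeeze $\mathcal{J}(x)=\inf_\pi\sup_\theta\mathcal{V}^{\pi,\theta}(x)=\mathcal{Y}^{\hat\pi,\hat\theta}(0)=\mathcal{V}^{\hat\pi,\hat\theta}(x)$, which also exhibits $(\hat\pi,\hat\theta)$ as a saddle point and yields the feedback form \eqref{optimalgeneral} by reading off the minimizer/maximizer in \eqref{esssuph} as functions of $(t,Y(t),K_1(t),K_2(t),\Upsilon_1(t,\cdot),\Upsilon_2(t,\cdot))$. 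Finally, one notes the dynamic statement $\mathcal{J}=\mathcal{Y}^{\hat\pi,\hat\theta}(t)$ follows the same way from the conditional (time-$t$) version of the representation.

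The main obstacle I expect is the comparison step itself in the \emph{delayed, jump} setting: the standard comparison theorem for BSDEs with jumps requires a structural (Royer-type) condition on the jump part of the generator — monotonicity through a term of the form $\gamma(t,z)$ with $\gamma>-1$ — and in the delayed case one must additionally control the memory terms $\mathcal{Y}(t-\varrho),K_i(t-\varrho),\Upsilon_i(t-\varrho,\cdot)$, which is only possible under a smallness/horizon condition on the Lipschitz constant $C$ and the length $T$ (or a suitably weighted norm), as in Delong--Imkeller. A secondary delicate point is the measurable-selection issue: extracting $\hat\pi,\hat\theta$ as $\mathcal{F}_t$-progressively measurable feedback maps achieving the $\inf\sup=\sup\inf$ in \eqref{esssuph} uses the compactness of $\mathbf{U}_1,\mathbf{U}_2$ together with continuity of $\mathcal{G}$ and a Benes/Filippov-type selection argument; I would state this explicitly rather than absorb it silently. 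Beyond these, the remaining steps — the product rule for $dG^{\theta}(X+\kappa Y)$, the integrability bookkeeping to stay in $\mathbb{S}^2\times\mathbb{H}^2\times\cdots$, and the passage between the $t=0$ and general-$t$ statements — are routine given the hypotheses already in force.
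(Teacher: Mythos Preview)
Your approach is essentially the paper's: introduce the three BSDEs with generators $\mathcal{W}(\cdot,\hat\pi,\theta)$, $\mathcal{W}(\cdot,\hat\pi,\hat\theta)$, $\mathcal{W}(\cdot,\pi,\hat\theta)$, order them via the saddle-point inequality coming from Isaacs' condition, and invoke the comparison theorem for BSDEs with jumps (Delong \cite{Delong}, Theorem~3.2.1) to obtain the ordering of the $\mathcal{Y}$'s and hence the value. One slip to fix: your displayed saddle-point inequality is reversed --- since $\hat\pi$ realizes the infimum and $\hat\theta$ the supremum in \eqref{esssuph}, the correct chain is $\mathcal{G}(\cdot,\hat\pi,\theta)\le\mathcal{G}(\cdot,\hat\pi,\hat\theta)\le\mathcal{G}(\cdot,\pi,\hat\theta)$, which via comparison gives $\mathcal{Y}^{\hat\pi,\theta}\le\mathcal{Y}^{\hat\pi,\hat\theta}\le\mathcal{Y}^{\pi,\hat\theta}$ and then the (correct) squeeze $\sup_\theta\mathcal{V}^{\hat\pi,\theta}\le\mathcal{Y}^{\hat\pi,\hat\theta}(0)\le\inf_\pi\mathcal{V}^{\pi,\hat\theta}$ that you write next.
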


\begin{proof}
The proof is based on the comparison principle for BSDEs with jumps as follows,
(see Theorem 3.2.1 in Delong \cite{Delong}).
Define three generators $\phi_1, \ \phi_2$ and $\phi_3$ by
\begin{eqnarray*}
\phi_1(t, y,k_1,k_2, \upsilon(\cdot)) &=& \mathcal{W}(t, y, k_1,k_2,
\upsilon_1(\cdot),\upsilon_2(\cdot), \hat{\pi}(t), \theta(t)) \\
  \phi_2(t, y,k_1,k_2, \upsilon(\cdot))&=& \mathcal{W}(t, y, k_1,k_2,
\upsilon_1(\cdot),\upsilon_2(\cdot), \hat{\pi}(t), \hat{\theta}(t)) \\
  \phi_3(t, y, k_1,k_2, \upsilon(\cdot)) &=& \mathcal{W}(t, y, k_1,k_2,
\upsilon_1(\cdot),\upsilon_2(\cdot), \pi(t), \hat{\theta}(t))
\end{eqnarray*}
and the corresponding BSDEs

\begin{eqnarray}\nonumber
  d\mathcal{Y}_1(t) &=& -\phi_1(t, y,k_1,k_2, \upsilon(\cdot))dt+K_1(t)d\widehat{W}(t) +K_2(t)dW_1(t)
+\int_{\mathbb{R}}\Upsilon_1(t,z)\widehat{N}_{\Lambda}(dt,dz) \\ \nonumber
  && +\int_0^{\infty}\Upsilon_2(t,z)\widehat{N}_{\Lambda}^0(dt,dz)\,, \\
\nonumber
  \mathcal{Y}_1(T) &=& h(X(T),Y(T))\,.
\end{eqnarray}

\begin{eqnarray}\nonumber
  d\mathcal{Y}_2(t) &=& -\phi_2(t, y,k_1,k_2, \upsilon(\cdot))dt+K(t)d\widehat{W}(t) +K_2(t)dW_1(t)
+\int_{\mathbb{R}}\Upsilon_1(t,z)\widehat{N}_{\Lambda}(dt,dz) \\ \nonumber
  && +\int_0^{\infty}\Upsilon_2(t,z)\widehat{N}_{\Lambda}^0(dt,dz) \\ \nonumber
  \mathcal{Y}_2(T) &=& h(X(T),Y(T))\,.
\end{eqnarray}
and
\begin{eqnarray}\nonumber
  d\mathcal{Y}_3(t) &=& -\phi_3(t, y,k_1,k_2, \upsilon(\cdot))dt+K(t)d\widehat{W}(t) +K_2(t)dW_1(t)
+\int_{\mathbb{R}}\Upsilon_1(t,z)\widehat{N}_{\Lambda}(dt,dz) \\ \nonumber
  && +\int_0^{\infty}\Upsilon_2(t,z)\widehat{N}_{\Lambda}^0(dt,dz)\,, \\
\nonumber
  \mathcal{Y}_3(T) &=& h(X(T),Y(T))\,.
\end{eqnarray}

From \eqref{esssuph}, we have
$$
\phi_1(t, y, k_1,k_2, \upsilon(\cdot))\leq \phi_2(t, y, k_1,k_2, \upsilon(\cdot))\leq
\phi_3(t, y, k_1,k_2, \upsilon(\cdot)).
$$
Then, by comparison principle, $\mathcal{Y}_1(t)\leq
\mathcal{Y}_2(t)=\mathcal{J}(x)\leq \mathcal{Y}_3(t)$, for all $t\in[0,T]$. By
uniqueness, we get $\mathcal{Y}_2(t)=\mathcal{V}^{\pi^*,\theta^*}$. Hence, the
optimal strategy is given by \eqref{optimalgeneral}.
\end{proof}

In order to solve our main problem, note that from the dynamics of the processes
$X(t),\,Y(t)$ and $G^{\theta_0,\theta_1,\theta_2}$ in \eqref{surplus}, \eqref{integrated}
and \eqref{martingalemeasure}, respectively and applying the It\^o's differentiation rule for delayed SDEs with jumps (See Ba\~nos {\it et. al.} \cite{banos}, Theorem 3.8), we have
formula, we have:
\begin{eqnarray*}
 && d[(X(t)+\kappa Y(t))G^{\theta_0,\theta_1,\theta_2}(t)] \\
  &=&
G^{\theta_0,\theta_1,\theta_2}(t)\Bigl[p(t)+(r(t)-\vartheta(t)-\xi)X(t)+\pi(t)(\hat{
\alpha}^{\Lambda}(t)-r(t))+({\vartheta}(t)-\kappa\zeta)\bar{ Y}(t)+\xi U(t)  \\
   && +\pi(t)\beta(t)\theta_0(t) +\theta_1(t)\kappa
X(t)(1-e^{-\zeta\varrho}\chi_{[0,T-\varrho]}) \\
   && +\sum_{j=1}^D\langle\hat{\Lambda}(t-),e_j\rangle
\Bigl(\pi(t)\int_{\mathbb{R}}z\theta_2(t,z)\varepsilon_j(t)\nu_j(dz)
-\int_0^\infty\lambda_j(t)z(1+\theta_0(t))f_j(dz)\Bigl) \Bigl]dt  \\
    && +
G^{\theta_0,\theta_1,\theta_2}(t)[(\pi(t)\beta(t)+X(t)\theta_0(t)) d\widehat{W}(t) +(\theta_1(t)+X(t)(1-e^{-\zeta\varrho}\chi_{[0,T-\varrho]}))dW_1(t)] \\
    && +G^{\theta_0,\theta_1,\theta_2}(t)\int_{\mathbb{R}}[(1+\theta_2(t,z))\pi(t)z
+X(t)\theta_2(t,z)] \widehat{N}_{\Lambda}(dt,dz) \\
    && -G^{\theta_0,\theta_1,\theta_2}(t)
\int_0^{\infty}[(1+\theta_0(t))z-X(t)\theta_0(t)]\widehat{N}^0_{\Lambda}(dt,dz)\
,.
\end{eqnarray*}
Thus, for each $(\pi,\theta_0,\theta_1,\theta_2)$,
\begin{eqnarray*}
   && \mathcal{J}(x) \\
  &=& \mathbb{E}\Bigl\{-\int_0^T\Bigl[
G^{\theta_0,\theta_1,\theta_2}(t)\Bigl[p(t)+(r(t)-\vartheta(t)-\xi)X(t)+\pi(t)(\hat{
\alpha}^{\Lambda}(t)-r(t)) +({\vartheta}(t)-\kappa\zeta)\bar{Y}(t) \\
   && -\xi U(t) +\pi(t)\beta(t)\theta_0(t) +\theta_1(t)\kappa
X(t)(1-e^{-\zeta\varrho}\chi_{[0,T-\varrho]}) \\
   && +\sum_{j=1}^D\langle\hat{\Lambda}(t-),e_j\rangle
\Bigl(\pi(t)\int_{\mathbb{R}}z\theta_2(t,z)\varepsilon_j(t)\nu_j(dz)
-\int_0^\infty\lambda_j(t)z(1+\theta_0(t))f_j(dz)\Bigl)\Bigl]  \\
   && +\ell(t,X(t),Y(t),U(t),\pi(t),\theta_0(t),\theta_1(t),\theta_2(t,\cdot))\Bigl]dt -
h(X(T),Y(T))\Bigl\}\,.
\end{eqnarray*}
We now define, for each $(t,X,Y,U,\pi,\theta_0,\theta_1,\theta_2)\in
[0,T]\times\mathbb{R}\times\mathbb{R}\times\mathbb{R}\times\mathbf{U}
_1\times\mathbf{U}_2$, a function
\begin{eqnarray*}
  && \tilde{\ell}(t,X(t),Y(t),U(t),\pi(t),\theta_0(t),\theta_1(t),\theta_2(t,\cdot)) \\
  &=&
G^{\theta_0,\theta_1,\theta_2}(t)\Bigl[p(t)+(r(t)-\vartheta(t)-\xi)X(t)+\pi(t)(\hat{
\alpha}^{\Lambda}(t)-r(t)) +({\vartheta}(t)-\kappa\zeta)\bar{Y}(t)-\xi U(t)  \\
   && +\pi(t)\beta(t)\theta_0(t) +\theta_1(t)\kappa
X(t)(1-e^{-\zeta\varrho}\chi_{[0,T-\varrho]}) \\
   && +\sum_{j=1}^D\langle\hat{\Lambda}(t-),e_j\rangle
\Bigl(\pi(t)\int_{\mathbb{R}}z\theta_2(t,z)\varepsilon_j(t)\nu_j(dz)
-\int_0^\infty\lambda_j(t)z(1+\theta_0(t))f_j(dz)\Bigl)\Bigl] \\
   &&    +\ell(t,X(t),Y(t),U(t),\pi(t),\theta_0(t),\theta_1(t),\theta_2(t,\cdot))\,.
\end{eqnarray*}
Then,
\begin{eqnarray*}
&& \mathcal{J}(x) \\
&=& -x_0+\mathbb{E}\Bigl[-\int_0^T
\tilde{\ell}(t,X(t),Y(t),U(t),\pi(t),\theta_0(t),\theta_1(t),\theta_2(t,\cdot))dt-h(X(T),
Y(T))\Bigl]\,.
\end{eqnarray*}
Define, for each $(\pi,\theta)\in\mathcal{A}\times\Theta$, a functional
$$
\tilde{\mathcal{J}}(x)=\mathbb{E}\Bigl[-\int_0^T
\tilde{\ell}(t,X(t),Y(t),U(t),\pi(t),\theta_0(t),\theta_1(t),\theta_2(t,\cdot))dt-h(X(T),
Y(T))\Bigl]\,.
$$
Then, the stochastic differential delay game problem discussed in the previous
section is equivalent to the following problem:
$$
\tilde{\mathcal{V}}(t,x)=\inf_{\pi\in\mathcal{A}}\sup_{(\theta_0,
\theta_1,\theta_2)\in\Theta} \tilde{\mathcal{J}}(x)\,.
$$
We now define the Hamiltonian of the aforementioned game problem
$\mathcal{H}:[0,T]\times\mathbb{R}\times\mathbb{R}\times\mathbb{R}\times\mathbb{
R}
\times\mathbb{R}\times\mathbb{R}\times\mathbf{U}_1\times\mathbf{U}
_2\rightarrow\mathbb{R}$ as follows:
\begin{eqnarray*}
&&\mathcal{H}(t,X(t),Y(t),U(t), K_1(t),K_2(t),\Upsilon_1(t,\cdot),
\Upsilon_2(t,\cdot),\pi(t),\theta_0(t),\theta_1(t),\theta_2(t,\cdot)) \\
&:=& -\tilde{\ell}(t,X(t),Y(t),U(t),\pi(t),\theta_0(t),\theta_1(t),\theta_2(t,\cdot))\,.
\end{eqnarray*}

In order for the Hamiltonian $\mathcal{H}$ to satisfy the Issac's condition, we
require that $\mathcal{H}$ is convex in $\pi$ and
concave in $(\theta_0,\theta_1,\theta_2)$. Moreover, for the existence and uniqueness
solution of the corresponding delayed BSDE with jumps, the Hamiltonian should satisfy
the Lipschitz condition. From the boundedness of the associate
parameters, we prove that $\mathcal{H}$ is indeed Lipschitz.

\begin{lemma}\label{lemma}
The Hamiltonian $\mathcal{H}$ is Lipschitz continuous in $X$, $Y$ and $U$.
\end{lemma}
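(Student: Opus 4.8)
The plan is to exploit the explicit form of the Hamiltonian $\mathcal{H} = -\tilde{\ell}$ together with the boundedness assumptions on all coefficients. Recall that $\tilde{\ell}$ is built from two pieces: the bracketed expression multiplied by $G^{\theta_0,\theta_1,\theta_2}(t)$ coming from the It\^o expansion of $(X(t)+\kappa Y(t))G^{\theta_0,\theta_1,\theta_2}(t)$, and the original penalty density $\ell$. By hypothesis $\ell$ is a bounded measurable function, hence in particular it is Lipschitz (indeed constant-bounded) in its arguments for the purposes we need; so the only thing requiring real work is the $G$-weighted bracket. First I would fix $(t,\pi,\theta)$ and treat $G^{\theta_0,\theta_1,\theta_2}(t)$ as a fixed $\mathcal{F}_t$-measurable multiplier. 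Then I would write out the difference
\[
\tilde{\ell}(t,X,Y,U,\pi,\theta) - \tilde{\ell}(t,\tilde X,\tilde Y,\tilde U,\pi,\theta)
\]
and observe that it is an affine function of $(X-\tilde X, \bar Y - \bar{\tilde Y}, U - \tilde U)$ (the jump integrals and the terms $\theta_1\kappa X(1-e^{-\zeta\varrho}\chi_{[0,T-\varrho]})$, $\pi(\hat\alpha^\Lambda - r)$, etc. do not involve $X,Y,U$ at all except linearly).

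Next I would collect the coefficients of $X-\tilde X$, of $\bar Y - \bar{\tilde Y}$ and of $U-\tilde U$. The coefficient of $X-\tilde X$ is $G^{\theta_0,\theta_1,\theta_2}(t)\big[(r(t)-\vartheta(t)-\xi) + \theta_1(t)\kappa(1-e^{-\zeta\varrho}\chi_{[0,T-\varrho]})\big]$; since $r$, $\vartheta$ are uniformly bounded, $\xi$, $\kappa$, $\zeta$, $\varrho$ are constants, and $\theta_1(t)$ ranges over the compact set $\mathbf{U}_2$ (hence is bounded), this coefficient is bounded in absolute value by a constant times $G^{\theta_0,\theta_1,\theta_2}(t)$. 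Similarly the coefficient of $\bar Y - \bar{\tilde Y}$ is $G^{\theta_0,\theta_1,\theta_2}(t)(\vartheta(t)-\kappa\zeta)$, bounded by a constant times $G^{\theta_0,\theta_1,\theta_2}(t)$, and the coefficient of $U-\tilde U$ is $-\xi\, G^{\theta_0,\theta_1,\theta_2}(t)$. Finally, passing from $\bar Y$ to $Y$ costs only the factor $\big(\int_{t-\varrho}^t e^{\zeta(s-t)}ds\big)^{-1}$, which is bounded on $[0,T]$ (note $\int_{t-\varrho}^t e^{\zeta(s-t)}ds \ge \varrho e^{-\zeta\varrho}$ when $\zeta>0$, and equals $\varrho$ when $\zeta=0$, so it is bounded below by a positive constant); so Lipschitz continuity in $\bar Y$ is equivalent to Lipschitz continuity in $Y$. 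Putting these together gives
\[
\big|\mathcal{H}(t,X,Y,U,\cdots) - \mathcal{H}(t,\tilde X,\tilde Y,\tilde U,\cdots)\big|
\le C\, G^{\theta_0,\theta_1,\theta_2}(t)\big(|X-\tilde X| + |Y-\tilde Y| + |U-\tilde U|\big) + \omega_\ell,
\]
where $\omega_\ell$ is the modulus coming from $\ell$ (zero if $\ell$ does not depend on $X,Y,U$, or the Lipschitz constant of $\ell$ otherwise — the paper takes $\ell$ bounded, and in the quadratic specialisation of Section 6 one checks $\ell$ is locally Lipschitz).

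The one genuine subtlety — and the step I expect to be the main obstacle — is the appearance of the multiplicative weight $G^{\theta_0,\theta_1,\theta_2}(t)$: strictly speaking the estimate above is a \emph{random} Lipschitz constant, not a deterministic one, so to land in the Lipschitz framework of Delong and Imkeller \cite{delong2010} one must either (i) absorb $G^{\theta_0,\theta_1,\theta_2}$ into the state by working with the product process $\mathcal{Y}(t) = (X(t)+\kappa Y(t))G^{\theta_0,\theta_1,\theta_2}(t)$ as the BSDE unknown — which is exactly what the It\^o computation preceding the lemma is setting up — so that the generator $\mathcal{W}$ is expressed directly in terms of $\mathcal{Y}$, $K_1,K_2,\Upsilon_1,\Upsilon_2$ and their delayed values, and the $G$ factors are no longer free multipliers; or (ii) invoke the standing integrability assumption $\mathbb{E}\big[\int_0^T(\theta_0^2+\theta_1^2+\int\theta_2^2\nu_\Lambda(dz))dt\big]<\infty$ together with boundedness of $\theta$ on the compact $\mathbf{U}_2$ to get uniform bounds on $G^{\theta_0,\theta_1,\theta_2}$ on $[0,T]$. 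I would present the argument via route (i), stating that after the change of variables the coefficients multiplying the differences of $\mathcal{Y},K_1,K_2,\Upsilon_1,\Upsilon_2$ (and their $\varrho$-delayed counterparts) are built only from the uniformly bounded market coefficients $r,\vartheta,p,\beta,\hat\alpha^\Lambda,\lambda_j,\varepsilon_j$ and the parameters on the compact sets $\mathbf{U}_1,\mathbf{U}_2$, hence are bounded by a deterministic constant $C$, which yields precisely the Lipschitz inequality with probability measure $\eta = \delta_{-\varrho}$ (the Dirac mass at $-\varrho$, since all delays here are pointwise at lag $\varrho$) required for the existence–uniqueness theorem to apply.
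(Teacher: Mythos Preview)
Your direct computational approach is sound and takes a genuinely different route from the paper. The paper does not unpack the affine dependence of $\mathcal{H}=-\tilde\ell$ on $(X,\bar Y,U)$ at all; instead it first asserts that, because $\ell$ is bounded and the controls $(\pi,\theta)$ range over the compact sets $\mathbf{U}_1\times\mathbf{U}_2$, the function $\tilde\ell$ (hence $\mathcal{H}$) is uniformly bounded, and then argues by contradiction: if $\mathcal{H}$ were not Lipschitz one could produce two points at which the difference of $\mathcal{H}$-values is unbounded, which would contradict uniform boundedness. (The paper in fact runs this contradiction in the variables $(K_1,K_2,\Upsilon_1,\Upsilon_2)$ rather than in $(X,Y,U)$.) Your argument is longer but more transparent: it makes the Lipschitz constant explicit in terms of the bounded coefficients $r,\vartheta,\xi,\kappa,\zeta$ and the compact control sets, and it forces the random multiplicative weight $G^{\theta_0,\theta_1,\theta_2}(t)$ into the open, which you then handle by the change of variables to the product process --- precisely the manoeuvre needed to place the generator in the Delong--Imkeller delayed-BSDE framework. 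The paper's route is considerably shorter but leans on the implication ``uniformly bounded $\Rightarrow$ Lipschitz'', which your explicit affine computation avoids entirely.
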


\begin{proof}
Since $(\pi,(\theta_0,\theta_1,\theta_2))\in\mathbf{U}_1\times \mathbf{U}_2$ and $\ell$ is bounded, $\tilde{\ell}$ is bounded. Then $\mathcal{H}$ is uniformly bounded with respect to $(t,X(t),Y(t),U(t),\pi(t),\theta_0(t),\theta_1(t),\theta_2(t,\cdot))$.  To prove the Lipschitz condition, we suppose that $\mathcal{H}$ is not Lipschitz continuous in $(K_1(t),K_2(t),\Upsilon_1(t,\cdot),
\Upsilon_2(t,\cdot))$, uniformly in $(t,X(t),Y(t),U(t))$. Then, there exist two points $(K_1(t),K_2(t),\Upsilon_1(t,\cdot),
\Upsilon_2(t,\cdot))$, $(\tilde{K}_1(t),\tilde{K}_2(t),\tilde{\Upsilon}_1(t,\cdot),\tilde{\Upsilon}_2(t,\cdot))$ such that
\begin{eqnarray*}
|\mathcal{H}(t,X(t),Y(t),U(t), K_1(t),K_2(t),\Upsilon_1(t,\cdot),
\Upsilon_2(t,\cdot),\pi(t),\theta_0(t),\theta_1(t),\theta_2(t,\cdot)) && \\
 -\mathcal{H}(t,X(t),Y(t),U(t),\tilde{K}_1(t),\tilde{K}_2(t),\tilde{\Upsilon}_1(t,\cdot),\tilde{\Upsilon}_2(t,\cdot) ,\pi(t),\theta_0(t),\theta_1(t),\theta_2(t,\cdot))| &&
 \end{eqnarray*} is unbounded. However, since $\mathcal{H}$ is uniformly bounded with respect to \\$(t,X(t), Y(t), U(t), \pi(t),\theta_0(t),\theta_1(t),\theta_2(t,\cdot))$, we have
 \begin{eqnarray*}
|\mathcal{H}(t,X(t),Y(t),U(t), K_1(t),K_2(t),\Upsilon_1(t,\cdot),
\Upsilon_2(t,\cdot),\pi(t),\theta_0(t),\theta_1(t),\theta_2(t,\cdot)) && \\
 -\mathcal{H}(t,X(t),Y(t),U(t),\tilde{K}_1(t),\tilde{K}_2(t),\tilde{\Upsilon}_1(t,\cdot),\tilde{\Upsilon}_2(t,\cdot) ,\pi(t),\theta_0(t),\theta_1(t),\theta_2(t,\cdot))| &\leq& \\
 |\mathcal{H}(t,X(t),Y(t),U(t), K_1(t),K_2(t),\Upsilon_1(t,\cdot),
\Upsilon_2(t,\cdot),\pi(t),\theta_0(t),\theta_1(t),\theta_2(t,\cdot))| && \\
+|\mathcal{H}(t,X(t),Y(t),U(t),\tilde{K}_1(t),\tilde{K}_2(t),\tilde{\Upsilon}_1(t,\cdot),\tilde{\Upsilon}_2(t,\cdot) ,\pi(t),\theta_0(t),\theta_1(t),\theta_2(t,\cdot))| &<& \infty\,,
 \end{eqnarray*}
 which contradicts the assumption. Then $\mathcal{H}$ is Lipschitz continuous.
\end{proof}

Then, following Theorem \ref{theorembsde} above, we establish the relationship between the
value function of the game problem and the solution of a delayed BSDE with jumps. Thus,
the value function $\tilde{\mathcal{J}}(t,x)$ is given by the following noisy memory BSDE:
\begin{eqnarray}\nonumber
  && d\tilde{\mathcal{J}}(t) \\ \nonumber
  &=& -\mathcal{H}(t,X(t),Y(t),U(t), K_1(t),K_2(t),\Upsilon_1(t,\cdot),
\Upsilon_2(t,\cdot),\pi^*(t),\theta_0^*(t),\theta_1^*(t),\theta_2^*(t,\cdot))dt \\ \nonumber
   && +K_1(t)d\widehat{W}(t) +K_2(t)dW_1(t)
+\int_{\mathbb{R}}\Upsilon_1(t,z)\widehat{N}_{\Lambda}(dt,dz)
+\int_{\mathbb{R}}\Upsilon_2(t,z)\widehat{N}^0_{\Lambda}(dt,dz)\,,
\end{eqnarray}
with the terminal condition $\tilde{\mathcal{J}}(T) = h(X(T),Y(T))$.

In fact, the existence and uniqueness of the solution to the above delayed BSDE with jumps is guaranteed from the Lipschitz condition proved in Lemma \ref{lemma}. Then the solution of the delayed BSDE is given by
\begin{eqnarray*}
 \tilde{\mathcal{J}}(t)  &=& \mathbb{E}\Bigl[h(X(T),Y(T))-\int_t^T\tilde{\ell}(s,X(s),Y(s),U(s),\pi^*(s),\theta_0^*(s),\theta_1^*(s),\theta_2^*(s,\cdot))ds \mid\mathcal{F}_t\Bigl] \\
   &=&  \mathcal{V}(\pi^*,\theta_1^*,\theta_2^*,\theta_3^*)\,,
\end{eqnarray*}
which is the optimal value function from Theorem \ref{theorembsde}.

\section{A quadratic penalty function case}

In this section, we consider a convex risk measure with quadratic penalty. We
derive explicit solutions when $\ell$ is quadratic in $\theta_0,\,\theta_1,\,\theta_2$
and identical zero in $h$. The penalty function under consideration here, may be
related to the entropic penalty function considered, for instance, by Delbaen
{\it et. al.} \cite{delbaen}. It has also been adopted by Elliott and Siu
\cite{Elliott2011}, Siu \cite{siu2012} and Meng and Siu \cite{Meng}. We obtain
the explicit optimal investment strategy and the optimal risks for this case of
a risk-based optimization problem with jumps, regime switching and noisy delay. Finally, we consider some particular cases and we see using some numerical parameters, how an insurer can allocate his portfolio.

Suppose that the penalty function is given by
\begin{eqnarray*}
&& \ell(t,X(t),Y(t),Z(t),\pi(t),\theta_0(t),\theta_1(t),\theta_2(t,\cdot)) \\
&:=& \frac{1}{2(1-\delta)}
\Bigl(\theta_0^2(t)+ \theta_1^2(t)+ \int_{\mathbb{R}}\theta_2^2(t,z)\nu_{\hat{\Lambda}}(dz)\Bigl)
G^{\theta_0,\theta_1,\theta_2}(t)\,,
\end{eqnarray*}
where $1-\delta$ is a measure of an insurer's relative risk aversion and
$\delta<1$. Then, the Hamiltonian $\mathcal{H}$ becomes:
\begin{eqnarray*}
  && \mathcal{H}(t,X(t),Y(t),U(t), K_1(t),K_2(t),\Upsilon_1(t,\cdot),
\Upsilon_2(t,\cdot),\pi(t),\theta_0(t),\theta_1(t),\theta_2(t,\cdot)) \\
  &=&
-G^{\theta_0,\theta_1,\theta_2}(t)\Bigl[p(t)+(r(t)-\vartheta(t)-\xi)X(t)+\pi(t)(\hat{
\alpha}^{\Lambda}(t)-r(t)) +(\bar{\vartheta}(t)-\kappa\zeta)Y(t)  \\
   && -\xi U(t) +\pi(t)\beta(t)\theta_0(t) +\theta_1(t)\kappa
X(t)(1-e^{-\zeta\varrho}\chi_{[0,T-\varrho]}) \\
   && +\sum_{j=1}^D\langle\hat{\Lambda}(t-),e_j\rangle
\Bigl(\pi(t)\int_{\mathbb{R}}z\theta_2(t,z)\varepsilon_j(t)\nu_j(dz)
-\int_0^\infty\lambda_j(t)z(1+\theta_0(t))f_j(dz)\Bigl)\Bigl] \\
   &&  -
\frac{1}{2(1-\delta)}\Bigl(\theta_0^2(t)+ \theta_1^2(t)+ \int_{\mathbb{R}}\theta_2^2(t,z)\nu_{\hat{\Lambda}}(dz)\Bigl)
G^{\theta_0,\theta_1,\theta_2}(t)\,.
\end{eqnarray*}
Applying the first order condition for maximizing the Hamiltonian with respect
to $\theta_0,\,\theta_1$ and $\theta_2$, and minimizing with respect to $\pi$, we obtain
the following

\begin{eqnarray*}
\pi^*(t) &=& \frac{\hat{\alpha}^{\Lambda}(t)-r(t)+(1-\delta)\beta(t)\Bigl(\sum_{j=1}
^D\langle\hat{\Lambda}(t-),e_j\rangle \int_0^{\infty}z\lambda_j(t)f_j(dz)
\Bigl)}{(1-\delta) \Bigl(\beta^2(t)+\sum_{j=1}^D\langle\hat{\Lambda}(t-),e_j\rangle
\int_{\mathbb{R}}z^2\varepsilon_j(t)\nu_j(dz)\Bigl)}\,, \\
 \theta_0^*(t) &=&
(1-\delta)\Bigl[\sum_{j=1}^D\langle\hat{\Lambda}(t-),e_j\rangle
\int_0^{\infty}z\lambda_j(t)f_j(dz) \\
&& -\frac{\hat{\alpha}^{\Lambda}(t)-r(t)+(1-\delta)\beta(t)\Bigl(\sum_{j=1}
^D\langle\hat{\Lambda}(t-),e_j\rangle \int_0^{\infty}z\lambda_j(t)f_j(dz)
\Bigl)}{(1-\delta) \Bigl(\beta^2(t)+
\sum_{j=1}^D\langle\hat{\Lambda}(t-),e_j\rangle
\int_{\mathbb{R}}z^2\varepsilon_j(t)\nu_j(dz)\Bigl)}\beta(t)\Bigl]\,, \\
\theta_1^*(t) &=& (\delta-1)\kappa X(t)(1-e^{-\zeta\varrho}\chi_{[0,T-\varrho]})
\end{eqnarray*}
and
$$
\theta_2^*(t,z)=(\delta-1)z\frac{\hat{\alpha}^{\Lambda}(t)-r(t)+(1-\delta)\beta(t)\Bigl(\sum_{j=1}
^D\langle\hat{\Lambda}(t-),e_j\rangle \int_0^{\infty}z\lambda_j(t)f_j(dz)
\Bigl)}{(1-\delta) \Bigl(\beta^2(t)+
\sum_{j=1}^D\langle\hat{\Lambda}(t-),e_j\rangle
\int_{\mathbb{R}}z^2\varepsilon_j(t)\nu_j(dz)\Bigl)}\,.
$$

Then, the value function of the game problem is given by the following BSDE:
\begin{eqnarray*}
d\mathcal{J}(t) &=&
G^{\theta_0^*,\theta_1^*,\theta_2^*}(t)\Bigl[p(t)+(r(t)-\vartheta(t)-\xi)X(t)+\pi^*(t)(\hat{
\alpha}^{\Lambda}(t)-r(t)) +(\bar{\vartheta}(t)-\kappa\zeta)Y(t)  \\
   && -\xi U(t) +\pi^*(t)\beta(t)\theta_0^*(t) +\theta_1^*(t)\kappa
X(t)(1-e^{-\zeta\varrho}\chi_{[0,T-\varrho]}) \\
   && +\sum_{j=1}^D\langle\hat{\Lambda}(t-),e_j\rangle
\Bigl(\pi^*(t)\int_{\mathbb{R}}z\theta_2^*(t,z)\varepsilon_j(t)\nu_j(dz)
-\int_0^\infty\lambda_j(t)z(1+\theta_0^*(t))f_j(dz)\Bigl)\Bigl] \\
   &&  +
\frac{1}{2(1-\delta)}\Bigl((\theta_0^*)^2(t)+ (\theta_1^*)^2(t)+ \int_{\mathbb{R}}(\theta_2^*)^2(t,z)\nu_{\hat{\Lambda}}(dz)\Bigl)\Bigl]dt \\
   && +K_1(t)d\widehat{W}(t) +K_2(t)dW_1(t)
+\int_{\mathbb{R}}\Upsilon_1(t,z)\widehat{N}_{\Lambda}(dt,dz)
+\int_{\mathbb{R}}\Upsilon_2(t,z)\widehat{N}^0_{\Lambda}(dt,dz)\,.
\end{eqnarray*}

\begin{example}
Suppose that the the driving processes $\tilde{N}$ and $\tilde{N}^0$ are Poisson processes $N$ and $N^0$, with the jump intensities $\lambda$ and $\lambda^0$. We consider the following cases:
\begin{itemize}
  \item[Case 1.] We suppose that there is no regime switching in the model, then the optimal investment strategy is given by
  $$
  \pi^*(t)=\frac{\alpha(t)-r(t)}{(1-\delta)(\beta^2(t)+\lambda)}\,.
  $$
  We assume the following hypothetical parameters: interest rate $r=4.5\%$, the appreciation rate $\alpha=11\%$, the volatility $\beta=20\%$, the insurer's relative risk aversion $\delta=0.5$ and the jump intensity given by $\lambda=0.5$. Then the optimal portfolio invested in the risky asset is given by $\pi^*=0.24074$, i.e., $24.074\%$ of the wealth should be invested in the risky share.
  \item[Case 2.] We suppose existence of two state Markov chain $\mathcal{S}=\{e_1,e_2\}$, where the states $e_1$ and $e_2$ represent the expansion and recession of the economy respectively. By definition, $\langle\hat{\Lambda}(t),e_1\rangle=\mathbb{P}(X(t)=e_1\mid\mathcal{G}_t)$ and $\langle\hat{\Lambda}(t),e_2\rangle=1-\mathbb{P}(X(t)=e_1\mid\mathcal{G}_t)$. Let $\alpha_i,r_i,\lambda_i,\lambda_i^0$ be the associate parameters when the economy is in state $e_i$, $i=1,2$. Then the optimal portfolio is given by
      \begin{eqnarray*}
        \pi^*(t) &=& \frac{[\alpha_1(t)-r_1(t)-(\alpha_2(t)-r_2(t))+(1-\delta)\beta(t) (\lambda_1^0(t)-\lambda_2^0(t))] \mathbb{P}(X(t)=e_1\mid\mathcal{G}_t)}{(1-\delta)[\beta^2(t)+\lambda_2(t)+(\lambda_1(t)-\lambda_2(t)) \mathbb{P}(X(t)=e_1\mid\mathcal{G}_t)]}  \\
         && + \frac{\alpha_2(t)-r_2(t)+(1-\delta)\beta(t) \lambda_2^0}{(1-\delta)[\beta^2(t)+\lambda_2(t)+(\lambda_1(t)-\lambda_2(t)) \mathbb{P}(X(t)=e_1\mid\mathcal{G}_t)]}\,.
      \end{eqnarray*}
      In this case, we consider the following parameters: $\alpha_1=13\%,\,\alpha_2=9\%,\,r_1=4.5\%,\,r_2=9\%,\,\beta=20\%,\, \lambda_1^0=\lambda_1=0.5,\,\lambda_2=\lambda_2^0=0.7,\,\delta=0.5$ and $\mathbb{P}(X=e_1)=70\%$. Then $\pi^*=0.28$, i.e., $28\%$ of the wealth should be invested in the risky share.
\end{itemize}

\end{example}

\subsection*{Acknowledgment}

We would like to express our deep gratitude to the University of Pretoria and
the MCTESTP Mozambique for their support.

\end{document}